\documentclass{article}

\usepackage[T1]{fontenc}
\usepackage[utf8]{inputenc}
\usepackage{microtype}

\usepackage{graphicx} 
\usepackage{subcaption} 
\usepackage{amsmath,amsthm,amssymb} 
\usepackage{hyperref}
\usepackage[nameinlink,capitalize]{cleveref}
\usepackage{authblk}
\usepackage{enumitem}

\newtheorem{theorem}{Theorem}

\title{Flow-Based Modelling of Population Dynamics with Consecutive Continuous Mutations}
\author[1,2,3]{Alexander Bratus}
\author[4]{Tatiana Yakushkina}
\author[3]{Vladimir Posvyanski}
\affil[1]{Marchuk Institute of Numerical Mathematics, Russian Academy of Sciences, 119333 Moscow, Russia}
\affil[2]{Moscow Center for Fundamental and Applied Mathematics at INM RAS, 119333 Moscow, Russia}
\affil[3]{Institute of Management and Digital Technologies, Russian University of Transport, 127055 Moscow, Russia}
\affil[4]{A.I. Alikhanyan National Science Laboratory (Yerevan Physics Institute) Foundation, 375036 Yerevan, Armenia}

\date{August 2025}

\begin{document}

\maketitle

\begin{abstract}
This study presents a continuous mathematical model of population dynamics incorporating the sequential emergence of new genotypes under limited resources. The model describes the evolution of genotype density in mutation space and time through a nonlinear flow-type equation that couples transport driven by a time-dependent mutation rate with logistic growth and nonlocal competition. Analytical solutions are derived for the advection--reaction regime without reverse mutations by applying the method of characteristics, yielding explicit expressions for variable carrying capacities and mutation velocities. We analyse the influence of decaying and accelerating mutation rates on the saturation and propagation of population fronts using level-set geometry. We demonstrate that the inclusion of reverse mutations transforms the system into a quasilinear parabolic equation with diffusion in genotype space, whose numerical solutions reveal the stabilising and smoothing effects of backward mutation flows. The proposed framework generalises the classical quasispecies and Crow--Kimura models by introducing logistic regulation, variable mutation rates and reversible transitions, thereby providing a unified mathematical approach to the analysis of evolutionary dynamics in virology, bacterial adaptation and tumour progression.
\end{abstract}

\noindent\textbf{Keywords:} 
mutation–selection dynamics; reversible mutations; hyperbolic–parabolic PDEs; viral and cancer evolution; population genetics.

\section{Introduction}

The population dynamics of many biological species is characterised by the occurrence of successive mutations, which give rise to novel genotypes. Examples include the emergence of new virus genotypes in HIV infection and during the COVID-19 pandemic~\cite{Martinez2011, Blyuss2024}. Similar phenomena are observed in the dynamics of cancer cell populations~\cite{Loeb2000, Loeb2011} and, more generally, rapidly mutating pathogens~\cite{Bessonov2020}. For instance, sequential mutations in pancreatic cancer cells can lead to the emergence of multiple distinct genotypes~\cite{Bratus2022}. The mathematical description of such processes is challenging because it requires the simultaneous treatment of two fundamentally different dynamics: the discrete process of the probabilistic appearance of new genotypes and the continuous process of population growth. 

In quasispecies models, such as those proposed by Eigen and by Crow and Kimura, discrete mutation and reproduction events are implemented via mutation matrices, which define the probabilities of generating new genotypes. However, these models do not account for species mortality or competitive interactions and typically assume that the total population size of all species remains constant over time. 

One way to overcome these limitations is to transition to continuous models. In this work, we propose a model of the sequential emergence of new genotypes based on a flow equation~\cite{Zeldovich1972, Bratus2010}. In this formulation, the population dynamics is described under the following assumptions:
\begin{enumerate}
    \item The set of genotypes is conceptually unbounded, each genotype assigned an index of the set $\{0, 1, \dots\}$.
    \item Successive mutations and the emergence of new genotypes occur at a prescribed positive mutation rate that depends on time.
    \item The growth rates of the genotypes depend both on the index of the genotype and on time.
    \item The population size of each genotype follows a logistic-type regulation.
    \item The limiting density of genotypes and their total population size are bounded by a function that depends on both the genotype index and time.
    \item In some cases, sequential mutations may admit reverse mutations.

Under the stated assumptions, it can be shown that the dynamics of a population with the sequential emergence of new genotypes can be described by either an initial boundary value problem for a flow equation with reverse mutations or by a quasilinear parabolic equation with nonlocal components. The model can be generalised to incorporate multiple branching sequences of successive mutations.
\end{enumerate}

\section{Model Formulation}

\subsection{Problem statement}
Let \( N(x,t) \) denote the number of genotypes with index \( x \in [0,\infty) \) at time \( t \geq 0 \).
For an interval of consecutive indices \((x, x+\Delta x]\) with \(\Delta x > 0\), define \emph{flow density}
\[
u(x,t) := \lim_{\Delta x \to 0} \frac{N(x+\Delta x, t) - N(x,t)}{\Delta x}.
\]
Let \( \Phi(x,t) \) denote the change in population rate at index \(x\) per unit of time,
\[
\Phi(x,t) := \lim_{\Delta t \to 0} \frac{N(x, t+\Delta t) - N(x,t)}{\Delta t}.
\]
If, within \((x, x+\Delta x]\) over \((t, t+\Delta t]\), no creation or annihilation occurs, then
\begin{equation}
N(x,t) = - \int_{x}^{x+\Delta x} u(\xi,t) \, d\xi + \int_{x}^{x+\Delta x} u(\xi, t+\Delta t) \, d\xi, \quad \Delta x > 0.
\label{eq:integrand1}
\end{equation}
Equivalently, in terms of $\Phi$,
\begin{equation}
N(x,t) = - \int_{t}^{t+\Delta t} \Phi(x+\Delta x, \tau) \, d\tau + \int_{t}^{t+\Delta t} \Phi(x, \tau) \, d\tau, \quad \Delta t > 0.
\label{eq:integrand2}
\end{equation}

The integrands in \eqref{eq:integrand1}, \eqref{eq:integrand2} can be written as  
\[
\frac{\partial u(x,\bar t)}{\partial {t}} \, \Delta x \, \Delta t
\quad \text{and} \quad
-\frac{\partial \Phi(\bar{x},t)}{\partial x} \, \Delta x \, \Delta t,
\]
with $\bar{x} \in (x, x+\Delta x]$, $\bar{t} \in (t, t+\Delta t]$.  
Letting $\Delta x,\Delta t \to 0$ yields the flow continuity equation~\cite{Zeldovich1972}:
\begin{equation}
\frac{\partial u(x,t)}{\partial t} + \frac{\partial \Phi(x,t)}{\partial x} = 0.
\label{eq:flow_eq}
\end{equation}

If there is a birth/death source in \((x, x+\Delta x]\) during \((t, t+\Delta t]\) that depends on the population density via a function $F(u(x,t))$, the balance includes
\[
\int_{t}^{t+\Delta t} \int_{x}^{x+\Delta x} F(u(\xi,\tau))\, d\xi\, d\tau,
\]
and \eqref{eq:flow_eq} becomes
\begin{equation}
\frac{\partial u(x,t)}{\partial t} + \frac{\partial \Phi(x,t)}{\partial x} = F(u(x,t)).
\label{eq:flow_with_F}
\end{equation}

Let $v(t)$ be the rate of successive mutations. If
\begin{equation*}
\Phi(x,t) = v(t)\, u(x,t),
\label{eq:phi_adv}
\end{equation*}
then \eqref{eq:flow_with_F} reduces to a first-order advection equation:
\begin{equation}
\frac{\partial u(x,t)}{\partial t} + v(t) \frac{\partial u(x,t)}{\partial x} = F(u(x,t)).
\label{eq:advdiff}
\end{equation}

If reverse mutations (from higher to lower density) are present~\cite{Abolitsina1960}, we set
\begin{equation*}
\Phi(x,t) = v(t)\, u(x,t) - \mu^2 \frac{\partial u(x,t)}{\partial x},
\end{equation*}
with $\mu>0$ characterising the intensity of reverse mutations. In that case,
\begin{eqnarray}
\frac{\partial u(x,t)}{\partial t}& + v(t) \displaystyle\frac{\partial u(x,t)}{\partial x} =
 F(u(x,t)) + \mu^2 \frac{\partial^2 u(x,t)}{\partial x^2},\label{eq:advdiff_mu}
\\  &x \in [0,\infty), \quad t \ge 0, \nonumber
\end{eqnarray}
which includes advection and diffusion.

To solve \eqref{eq:advdiff}, we specify initial and boundary data:
\begin{equation}
u(x,0) = \varphi(x), \qquad u(0,t) = g(t), \qquad t \ge 0.
\label{eq:ibc}
\end{equation}

\subsection{Two choices for the population change source term}

\paragraph{Local logistic regulation.}
\begin{equation}
F(u) = r(x,t) \big( q(x) - u(x,t) \big) u(x,t),
\label{eq:F_logistic}
\end{equation}
where $r(x,t)$ is the growth rate of genotype $x$ at time $t$, and $q(x)$ is a limiting density. The function $q(x)$ enables modelling the effect of intervention, e.g., preventive, sanitary, and therapeutic measures, on population dynamics.

\paragraph{Nonlocal competition.}
\begin{equation}
F(u) = r(x,t) \left( q(x) - \gamma \int_0^x u(\xi,t)\,d\xi - u(x,t) \right) u(x,t),
\label{eq:F_nonlocal}
\end{equation}
where the nonlocal term accounts for competitive pressure accumulated along the mutation axis.

\paragraph{Multiple branches.}

If the root genotype $x=0$ spawns several mutation branches with growth rates $r_i(x_i,t)$ and mutation rates $v_i(t)$, $i=1,\dots,n$, the corresponding equations mirror \eqref{eq:advdiff_mu}–\eqref{eq:ibc} for each branch.

\section{Analytical solution without reverse mutations}

Consider \eqref{eq:advdiff_mu},\eqref{eq:ibc} with $\mu=0$ and $v(t)$ --- a continuous function of $t$. Define
\[
V(t) = \int_0^t v(\tau)\, d\tau.
\]

\begin{theorem}\label{th:t1}
Let the growth rate function $r(x,t)\ge 0$ be continuous on $[0,\infty)\times[0,\infty)$, boundary data $\varphi(x)$ and $g(t)$ defined by smooth functions with initial values $\varphi(0)=g(0)$, $\varphi'(0)=g'(0)$, and parameter $q\in\mathbb{R}$ be constant. Assume $V$ is monotone on $t\ge 0$. Then:
\begin{itemize}
\item[\textbf{(i)}]
\noindent In $D_1=\big\{(x,t): x\ge V(t),\ t\ge0\big\}$,
\begin{equation}
u_1(x,t) = \frac{q \, \varphi(y_1)}{\varphi(y_1) + \left(q - \varphi(y_1)\right) e^{-Z(x,t)}},
\qquad y_1 = x - V(t),
\label{eq:u1}
\end{equation}
where $Z$ solves
\begin{equation}
\frac{\partial Z(x,t)}{\partial t} + v(t) \frac{\partial Z(x,t)}{\partial x} = r(x,t) q, 
\qquad Z(x,0) = 0.
\label{eq:Z}
\end{equation}

\item[\textbf{(ii)}]
In $D_2=\big\{(x,t): x< V(t),\ t\ge0\big\}$,
\begin{flalign}
    u_2(x,t) =& \frac{q \, g(y_2)}{g(y_2) + \left(q - g(y_2)\right) 
e^{-Z(x,t) + Z(0,\, V^{-1}(\tau-x))}}, \label{eq:u2}\\
 &y_2 = V^{-1}(\tau - x),\ \tau=V(t).\nonumber
\end{flalign}
\end{itemize}
\end{theorem}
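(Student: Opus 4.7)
The plan is to apply the method of characteristics. The characteristic curves of the transport operator $\partial_t + v(t)\partial_x$ satisfy $dx/ds = v(s)$, so the curve through a base point $(x_0,t_0)$ is $x(s) = x_0 + V(s) - V(t_0)$. Along any such characteristic the PDE \eqref{eq:advdiff} with source \eqref{eq:F_logistic} for constant $q$ reduces to the logistic ODE
\begin{equation*}
\frac{du}{ds} = r\bigl(x(s),s\bigr)\,\bigl(q - u(s)\bigr)\,u(s).
\end{equation*}
This is a Bernoulli equation; the substitution $w = u/(q-u)$ linearises it to $w'(s) = q\,r(x(s),s)\,w(s)$, and inverting the resulting exponential yields the closed form
\begin{equation*}
u(s) = \frac{q\, u(t_0)}{u(t_0) + \bigl(q - u(t_0)\bigr)\exp\!\bigl(-q\!\int_{t_0}^{s} r(x(\sigma),\sigma)\,d\sigma\bigr)}.
\end{equation*}

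Next, I would introduce the running integral $Z(x,t)$ as the exponent $q\int_0^t r$ measured along the characteristic ending at $(x,t)$; differentiation along the transport direction confirms that $Z$ satisfies \eqref{eq:Z}. Monotonicity of $V$ guarantees the existence of $V^{-1}$ on the range of $V$, so each point of the open first quadrant is reached by exactly one characteristic, which meets the parabolic boundary $\{t=0\}\cup\{x=0\}$ at a single point lying on $\{t=0\}$ precisely when $(x,t)\in D_1$. This gives the natural partition used in the statement.

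For $(x,t)\in D_1$ the characteristic emanates from $(y_1,0)$ with $y_1 = x - V(t)\ge 0$, so I substitute $u(t_0)=\varphi(y_1)$ and identify the exponent with $Z(x,t)$, recovering \eqref{eq:u1}. For $(x,t)\in D_2$ the characteristic strikes $x=0$ at time $t_0 = V^{-1}(V(t)-x) = y_2$, so $u(t_0)=g(y_2)$ and the integral from $t_0$ to $t$ of $qr$ along the characteristic equals $Z(x,t) - Z(0,y_2)$: the two evaluations of $Z$ lie on the same characteristic extended back to $\sigma=0$, so their common piece cancels. Substitution into the logistic closed form produces \eqref{eq:u2}.

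Finally, I would verify agreement of $u_1$ and $u_2$ across the interface $x=V(t)$. There $y_1 = y_2 = 0$ and, since $V(0)=0$, also $Z\bigl(0, V^{-1}(\tau - x)\bigr) = Z(0,0) = 0$; hence $\varphi(0)=g(0)$ secures continuity and $\varphi'(0)=g'(0)$ secures matching of first derivatives, delivering a classical $C^{1}$ solution of \eqref{eq:advdiff}, \eqref{eq:ibc}. The main subtlety I expect to wrestle with is the consistent meaning of $Z$ used in \eqref{eq:u2}: the difference $Z(x,t)-Z(0,y_2)$ must be read as an intrinsic integral along the $D_2$ characteristic, which forces an implicit extension of $r$ across the line $x=0$ (or, equivalently, an auxiliary boundary condition for $Z$ at $x=0$), together with a verification that the final expression for $u_2$ is independent of that extension because both $Z$-values share the same backward characteristic.
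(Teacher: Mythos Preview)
Your proposal is correct and follows the same underlying strategy as the paper: the method of characteristics, along which the PDE collapses to a logistic ODE whose closed form produces \eqref{eq:u1} and \eqref{eq:u2}. The paper presents this as a verification---direct substitution for part~(i), and for part~(ii) a chain-rule computation showing that $P:=Z\bigl(0,V^{-1}(V(t)-x)\bigr)$ is annihilated by the transport operator, so that $\mu:=Z-P$ obeys \eqref{eq:Z}---whereas you \emph{derive} the formulae constructively and then match across $x=V(t)$. The subtlety you flag about the meaning of $Z(0,y_2)$ is exactly the point the paper addresses with its $P$-computation; your observation that the two $Z$-values lie on the same extended characteristic is the same fact in different clothing.
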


\begin{proof}

Part\textbf{(i)}  is verified by direct substitution of \eqref{eq:u1} into \eqref{eq:advdiff_mu} with \eqref{eq:ibc}.

For deriving \textbf{(ii)}, let 
$$\vartheta=\tau-x \text{ and } \gamma=V^{-1}(\vartheta).$$
For any smooth function $f$,
\[
\frac{\partial}{\partial t} f\!\big(V^{-1}(\vartheta)\big)
= f'_{\gamma}\, V^{-1}_{\vartheta}(\vartheta)\, v(t),
\qquad
\frac{\partial}{\partial x} f\!\big(V^{-1}(\vartheta)\big)
= - f'_{\gamma}\, V^{-1}_{\vartheta}(\vartheta).
\]
Lower indices indicate partial derivatives in the corresponding variables.  
Let
\[
P\!\big(V^{-1}(\vartheta)\big) = Z\!\big(0, V^{-1}(\vartheta)\big).
\]
Then 
\[\frac{\partial}{\partial t} P + v(t)\,\frac{\partial}{\partial x} P=0.\] 
along characteristics.  
Define
\[
\mu(x,t) = Z(x,t) - Z\!\big(0, V^{-1}(V(t)-x)\big).
\]
Using \eqref{eq:Z} one checks that in $D_2$, $\mu$ solves
\[
\frac{\partial \mu}{\partial t} + v(t)\frac{\partial \mu}{\partial x} = r(x,t)\,q,
\qquad \mu(x,0)=0,
\]
and substituting back gives \eqref{eq:u2}.
\end{proof}

\subsection{Explicit $Z$ and $\mu$ for selected $r(x)$}

\paragraph{Monotone growth:}
\[
r(x) = \frac{\delta}{1+x}, \quad \delta>0,\ v(t)\equiv v_0>0.
\]
Then, for $(x,t)\in D_1$,
\[
Z(x,t) = \delta t - \frac{\delta}{v_0}\ln\!\left(\frac{1+x}{1+x-v_0 t}\right),
\qquad
\mu(x,t) = \frac{\delta x}{v_0} - \frac{\delta}{v_0}\ln(1+x) \quad \text{in } D_2.
\]

\paragraph{Non-monotone growth:}
\[
r(x) = \frac{\delta_1 x}{1+\delta_2 x^2}, \quad \delta_1,\delta_2>0,\ v(t)\equiv v_0>0.
\]
Then, for $(x,t)\in D_1$,
\begin{flalign*}
&Z(x,t) = \frac{\delta_1}{2 v_0 \delta_2} 
\ln\!\left( \frac{1+\delta_2 (x-v_0 t)^2}{1+\delta_2 x^2} \right),
\\
&\mu(x,t) = -\frac{\delta_1}{2 v_0 \delta_2} \ln\!\big(1+\delta_2 x^2\big) \ \text{in } D_2.
\end{flalign*}

\subsection{Variable mutation rate cases}

Consider two extreme scenarios,
\[
v(t) = v_0 e^{-\alpha t} \quad \text{and} \quad v(t) = v_0 e^{\alpha t}, \qquad \alpha>0,
\]
with \( r(x) = \dfrac{\delta x}{1+x},\ \delta>0\).

\subsubsection{Case $v(t)=v_0 e^{-\alpha t}$}
We get

\begin{flalign}
&x > \dfrac{v_0}{\alpha}\big(1-e^{-\alpha t}\big), \nonumber\\
&Z(x,t) = \delta \left[
t - \frac{1}{1+x + \frac{v_0}{\alpha} e^{-\alpha t}}
\left(
t + \frac{1}{\alpha} \ln(1+x) \right.\right.\\
&\left. \left.- \frac{1}{\alpha} 
\ln\!\left(1+x+\frac{v_0}{\alpha} (e^{-\alpha t} - 1)\right)
\right)
\right],
\end{flalign}
and

\begin{flalign}
&x < \dfrac{v_0}{\alpha}\big(1-e^{-\alpha t}\big),\nonumber\\
&\mu(x,t) = \delta \left[
t + \frac{1}{\alpha} \ln\!\left( e^{-\alpha t} + \frac{\alpha}{v_0} x \right)
\right] \\
&\quad - \frac{\delta}{1+x+\frac{v_0}{\alpha} e^{-\alpha t}}
\left[t + \frac{1}{\alpha} \ln\!\left( e^{-\alpha t} + \frac{\alpha}{v_0} x \right) 
+ \frac{1}{\alpha} \ln(1+x) \right].\nonumber
\end{flalign}

\subsubsection{Case $v(t)=v_0 e^{\alpha t}$}
Obtain $Z,\mu$ from the previous case by replacing $\alpha\mapsto -\alpha$.

\section{Level curves for different sets of parameters}

Figure~\ref{fig:p1a} shows level curves of the solution to \eqref{eq:advdiff_mu}–\eqref{eq:ibc} with
\[
\varphi(x) = \varepsilon e^{-x^2/\varepsilon^2},\quad 
\quad
\varepsilon = 0.5,\ \delta = 0.15,\ v_0=0.3,\ q=1,
\]
where the boundary influx decays monotonically towards $\varepsilon_1$ over time
\[ g(t) = \frac{\varepsilon + \varepsilon_1 t}{1+t}, \varepsilon_1 = 0.3\]
as well as monotonic function \( r(x) = \dfrac{\delta x}{1+x} \).
Figure~\ref{fig:p1b} repeats the setting with \(\varepsilon_1 = 0 \).
\begin{figure}[!h]
  \centering
  \begin{subfigure}[t]{0.45\linewidth}
    \centering
    \includegraphics[width=\linewidth]{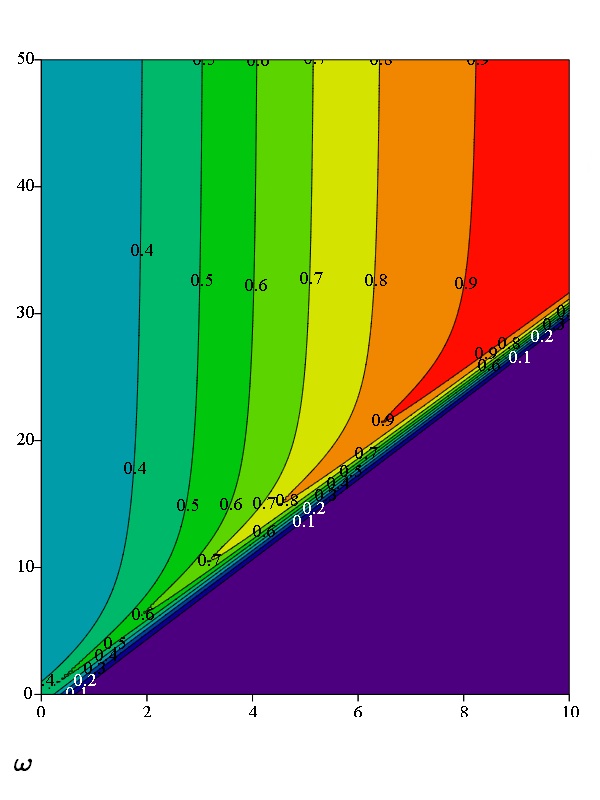}
    \caption{Case: the boundary influx $g(t)$ starts relatively high and increases with time.}
    \label{fig:p1a}
  \end{subfigure}
  \hfill
  \begin{subfigure}[t]{0.45\linewidth}
    \centering
    \includegraphics[width=\linewidth]{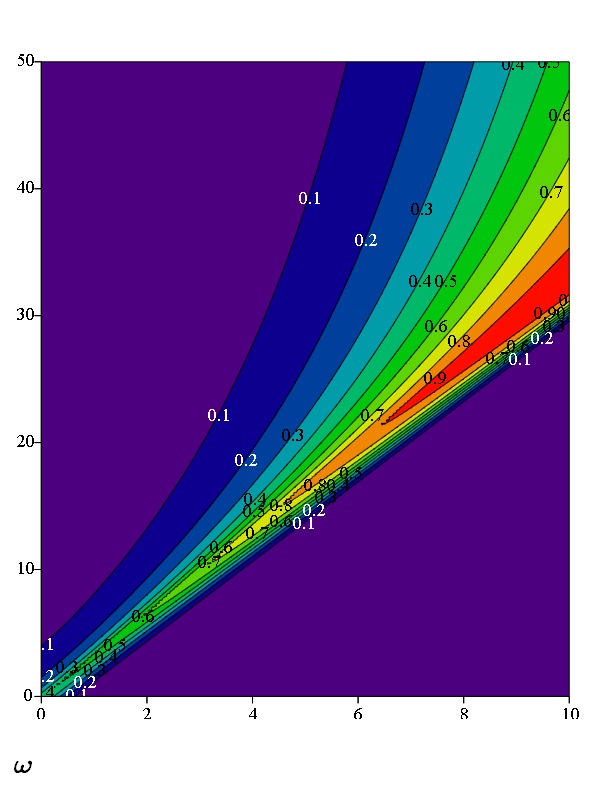}
    \caption{Case: boundary influx decays over time.}
    \label{fig:p1b}
  \end{subfigure}
  \caption{Level curves of the solution $u(x,t)$ to \eqref{eq:advdiff_mu}-\eqref{eq:ibc} under two different forms of the function $g(t)$. Here and after, the horizontal axis corresponds to the variable $x$, the vertical axis to time $t$, and the colour shading indicates the values of $u(x,t)$. Contour lines connect points where the solution attains the same value $\omega$, visualising the evolution of genotype density over time. }
  \label{fig:p1}
\end{figure}

\begin{figure}[!h]
  \centering
  \begin{subfigure}[t]{0.45\linewidth}
    \centering
    \includegraphics[width=\linewidth]{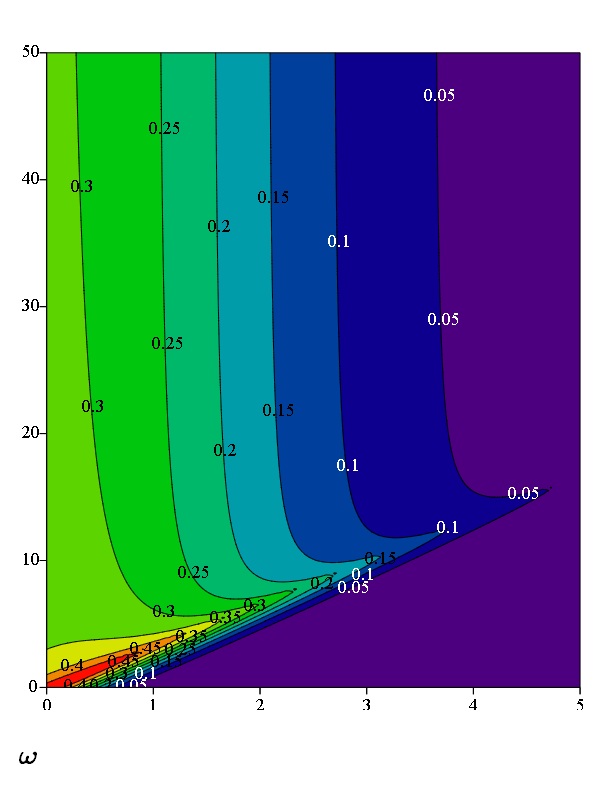}
    \caption{Case: the boundary influx $g(t)$ starts relatively high and increases with time; with $p=0.3$, solutions sustain broader growth across $\omega, T$.}
    \label{fig:p2a}
  \end{subfigure}
  \hfill
  \begin{subfigure}[t]{0.45\linewidth}
    \centering
    \includegraphics[width=\linewidth]{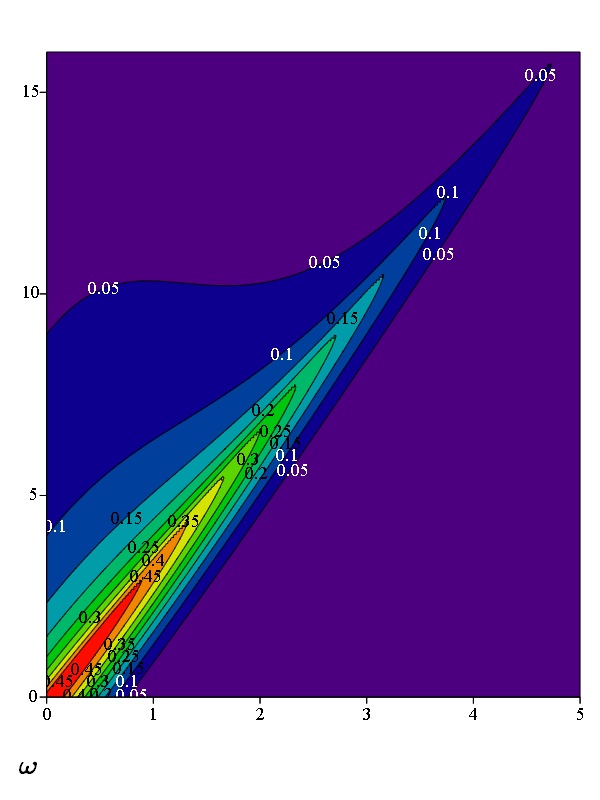}
    \caption{Case: boundary influx decays over time with $p=0$, persistence is weaker and restricted.}
    \label{fig:p2b}
  \end{subfigure}
  \caption{Level curves of the solution $u(x,t) = \omega$ to \eqref{eq:advdiff_mu}-\eqref{eq:ibc} under two different forms of the function $g(t)$. }
  \label{fig:p2}
\end{figure}

Figures~\ref{fig:p2} (a,b)  illustrate the same problem when \( r(x) = \dfrac{\delta_1 x}{1+\delta_2 x^2} \)  where \(\delta_1 = 0.15, \delta_2 = 0.1 \) and two different values $\varepsilon_1=0.3$ and $\varepsilon_1=0$.

The results suggest that the growth of the founder genotype ($x=0$) is the primary driver of the overall population dynamics, whereas the initial distribution $\varphi(x)$ plays a lesser role. Decreasing the mutation rate increases the total population size, while increasing it reduces genotype counts. Specifically, it allows for more accumulation in existing genotypes, resulting in a larger total population at any given time. Whereas increasing $v(t)$ rapidly moves individuals to a higher index (diluting the population density across many genotypes), which is why high $v(t)$ reduces counts per genotype. 

Figures~\ref{fig:p3}-\ref{fig:p4} illustrate the cases with variable mutation rate, discussed in section \textbf{3.2}: decreasing $v(t)$ leads to horizontal bending of level curves, reflecting saturation of transport with time; increasing $v(t)$ drives acceleration, yielding a steep diagonal propagation. It illustrates well how transport dynamics can dominate the shape of the solution.

\begin{figure}[!h]
  \centering
  \begin{subfigure}[t]{0.45\linewidth}
    \centering
    \includegraphics[width=\linewidth]{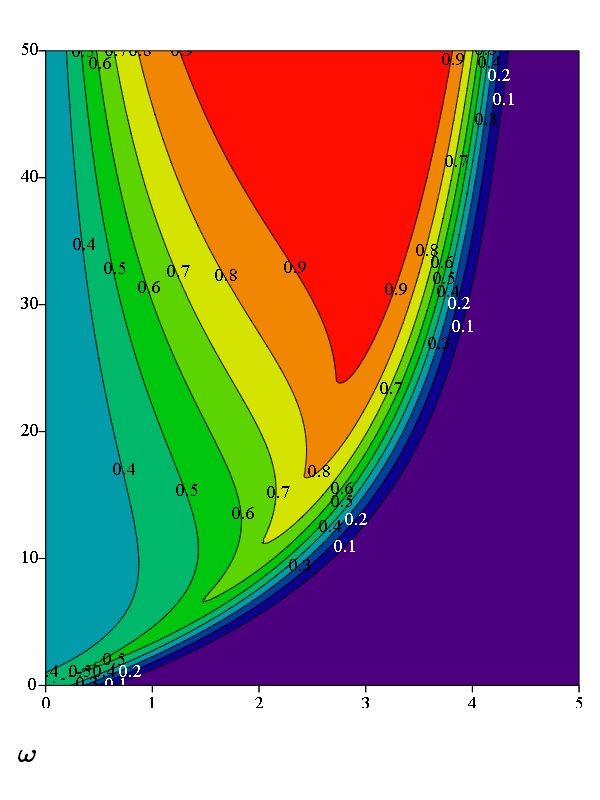}
    \caption{Case:  $v(t)=v_0 e^{-\alpha t}$ (decaying mutation flow).}
    \label{fig:p3a}
  \end{subfigure}
  \hfill
  \begin{subfigure}[t]{0.45\linewidth}
    \centering
    \includegraphics[width=\linewidth]{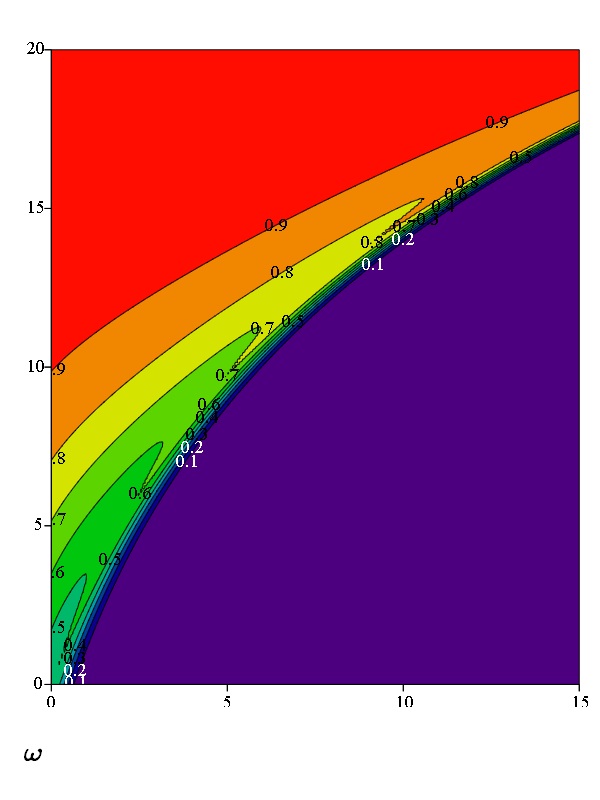}
    \caption{Case: $v(t)=v_0 e^{+\alpha t}$ (growing mutation flow).}
    \label{fig:p3b}
  \end{subfigure}
  \caption{Level sets of the solution \(u(x,t)\):
  (a) Decaying \(v(t)\) produces horizontal bending of level curves,
  reflecting the saturation of transport with time.
  (b) Growing \(v(t)\) drives acceleration, yielding a steep diagonal propagation.}
  \label{fig:p3}
\end{figure}
\begin{figure}[!h]
  \centering
  \begin{subfigure}[t]{0.45\linewidth}
    \centering
    \includegraphics[width=\linewidth]{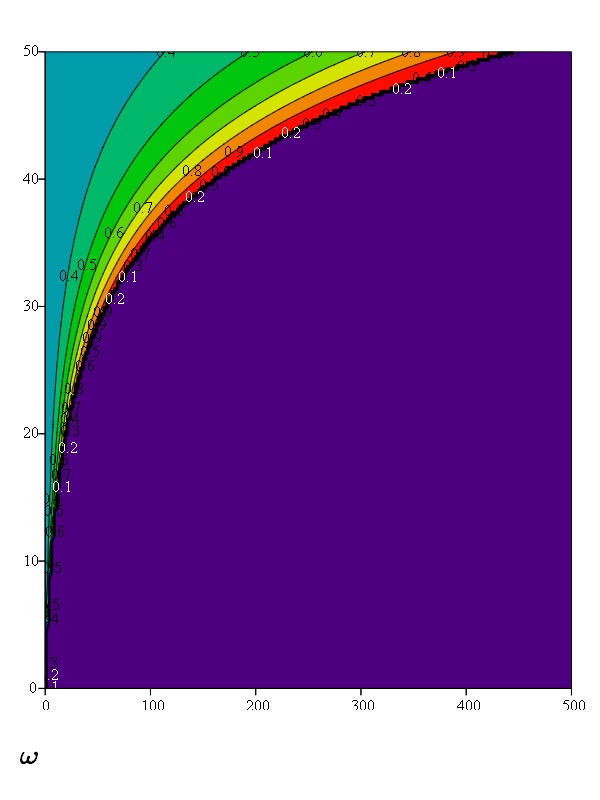}
    \caption{Case: the boundary influx $g(t)$ has $\varepsilon_1=0.3$, strong accelerated expansion.}
    \label{fig:p3c}
  \end{subfigure}
  \hfill
  \begin{subfigure}[t]{0.45\linewidth}
    \centering
    \includegraphics[width=\linewidth]{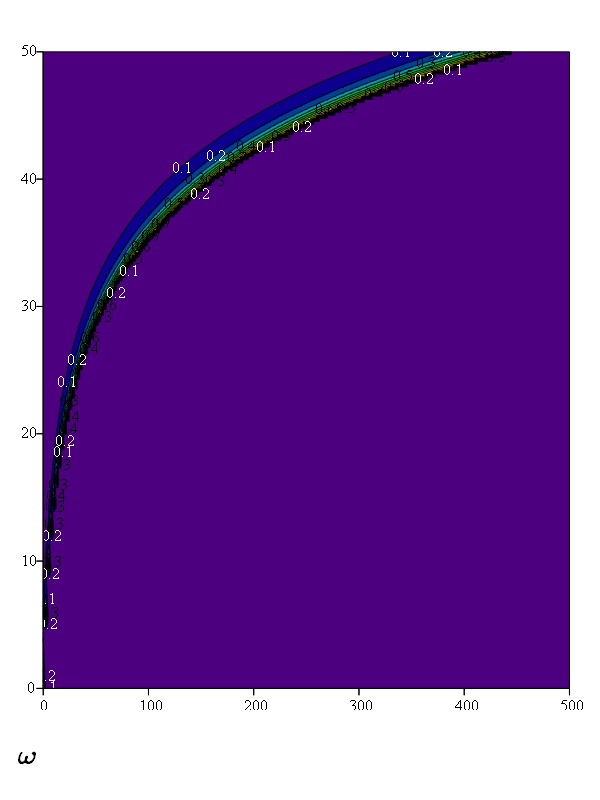}
    \caption{Case: boundary influx with $\varepsilon_1=0$, weak accelerated expansion.}
    \label{fig:p3d}
  \end{subfigure}
  \caption{Level sets of the solution $u(x,t)$ with exponentially growing transport velocity 
  $v(t) = v_{0} e^{+\alpha t}$. Boundary input at $x=0$ with $\varepsilon_1=0$ produces a thinner wavefront with fewer high-$\omega$ levels propagating in $x$. }
  \label{fig:p4}
\end{figure}

\section{Non-constant limiting density $q(x)$}

\begin{theorem}\label{th:qvar}
Assume the conditions of \cref{th:t1} and additionally $\varphi(0)=g(0)=0$. Let $q(x)$ be smooth on $[0,\infty)$. Then:

\begin{itemize}
    \item [\textbf{(i)}] In $D_1=\big\{x\ge V(t)\big\}$,
\begin{equation}
u_1(x,t) =
\frac{q(y_1)\, \varphi(y_1)}
{\varphi(y_1) + \left( q(y_1) - \varphi(y_1) \right) e^{-Z_1(x,t)}},
\qquad y_1 = x - V(t),
\label{eq:u1_q}
\end{equation}
where $Z_1$ solves
\begin{flalign}
    \frac{\partial Z_1}{\partial t}& + v(t) \frac{\partial Z_1}{\partial x} =
q(x)\, r(x,t) +
\frac{r(x,t)\, \varphi(y_1)\, \big( q(x) - q(y_1) \big)}
{q(y_1) - \varphi(y_1)}
\, e^{Z_1(x,t)},\label{eq:Z1}\\
 &Z_1(x,0)=0.\nonumber
\end{flalign}

\item [\textbf{(ii)}]  In $D_2=\big\{x< V(t)\big\}$,
\begin{equation}
u_2(x,t) =
\frac{q(y_2)\, g(y_2)}
{g(y_2) + \left( q(y_2) - g(y_2) \right) e^{-Z_2(x,t)}},
\qquad y_2 = V^{-1}(V(t)-x),
\label{eq:u2_q}
\end{equation}
where $Z_2$ solves
\begin{flalign}
\frac{\partial Z_2}{\partial t} &+ v(t) \frac{\partial Z_2}{\partial x} =
r(x,t)\, q(x) +
\frac{r(x,t)\, g(y_2)\, \big( q(x) - g(y_2) \big)}
{q(y_2) - g(y_2)} \, e^{Z_2(x,t)},\label{eq:Z2}\\
& Z_2(0,t)=0.\nonumber
\end{flalign}
\end{itemize}
\end{theorem}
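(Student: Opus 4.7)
I would follow the method of characteristics used in \cref{th:t1}, adjusting for the fact that the carrying capacity $q(x(s))$ is no longer a characteristic invariant, which converts the auxiliary problem for the logistic exponent into a quasilinear rather than linear transport equation.

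First, I fix a characteristic $x(s)=y_1+V(s)$ of $\partial_t+v(t)\partial_x$ originating at $(y_1,0)$ and carrying the initial value $\varphi(y_1)$. Restricted to this curve, \eqref{eq:advdiff} with source \eqref{eq:F_logistic} becomes the Bernoulli ODE $\dot u=r(x(s),s)(q(x(s))-u)\,u$ with $u(0)=\varphi(y_1)$. Since $y_1,\,\varphi(y_1),\,q(y_1)$ are all characteristic invariants, I insert the Ansatz \eqref{eq:u1_q} with these quantities treated as constants, and compute both the directional derivative $\dot u_1=\partial_t u_1+v(t)\partial_x u_1$ and the right-hand side $r(q(x)-u_1)u_1$.

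Second, equating the two sides and clearing the common denominator $[\varphi(y_1)+(q(y_1)-\varphi(y_1))e^{-Z_1}]^2$ reduces, after short algebra, to
\[
(q(y_1)-\varphi(y_1))\,\dot Z_1\, e^{-Z_1}=r(x,t)\,\varphi(y_1)\bigl(q(x)-q(y_1)\bigr)+r(x,t)\,q(x)\bigl(q(y_1)-\varphi(y_1)\bigr)e^{-Z_1},
\]
which rearranges to \eqref{eq:Z1} after solving for $\dot Z_1$ and writing $\dot Z_1=\partial_t Z_1+v(t)\partial_x Z_1$. The condition $Z_1(x,0)=0$ is forced by noting that \eqref{eq:u1_q} collapses to $\varphi(x)$ at $t=0$. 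For part (ii) I repeat the construction with characteristics entering $D_2$ from the boundary $x=0$ at time $t_0=y_2=V^{-1}(V(t)-x)$, carrying the value $g(y_2)$; choosing the Ansatz parameter in \eqref{eq:u2_q} to be $q(y_2)$ reproduces the previous verification step for step and yields \eqref{eq:Z2} together with $Z_2(0,t)=0$. The assumed compatibility $\varphi(0)=g(0)=0$ kills both numerators on the separatrix $x=V(t)$, so $u_1$ and $u_2$ match continuously there.

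The main obstacle, in contrast with \cref{th:t1}, is that \eqref{eq:Z1}--\eqref{eq:Z2} are no longer linear transport equations: the source contains $e^{Z_1}$, so along a characteristic $Z_1$ obeys the nonlinear scalar ODE
\[
\frac{dZ_1}{ds}=r\,q(x(s))+\frac{r\,\varphi(y_1)\bigl(q(x(s))-q(y_1)\bigr)}{q(y_1)-\varphi(y_1)}\,e^{Z_1},
\]
which can blow up in finite time if $q(x(s))-q(y_1)$ is positive and grows fast enough along the curve. Consequently, \eqref{eq:u1_q}--\eqref{eq:u2_q} should be read as representation formulas valid on the time interval where $Z_1, Z_2$ remain smooth; a quantitative existence window would require bounds on $\|q\|_{C^1}$, $r$, and the contrast $|q(x)-q(y_1)|$, together with a non-degeneracy assumption preventing $q(y_1)-\varphi(y_1)$ (respectively $q(y_2)-g(y_2)$) from vanishing — the standard logistic degeneracy.
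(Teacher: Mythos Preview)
Your argument is correct and matches the paper's: both verify the ansatz by substituting \eqref{eq:u1_q}--\eqref{eq:u2_q} into the PDE and reducing along characteristics to the auxiliary equations \eqref{eq:Z1}--\eqref{eq:Z2}; your write-up in fact supplies more of the intermediate algebra than the paper does.

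The one point the paper adds that you stop just short of is the observation that the nonlinear ODE you isolate for $Z_1$ along a characteristic is, after the substitution $w_1=e^{Z_1}$, a Bernoulli equation
\[
\frac{dw_1}{ds}=r\,q(x(s))\,w_1+\frac{r\,\varphi(y_1)\bigl(q(x(s))-q(y_1)\bigr)}{q(y_1)-\varphi(y_1)}\,w_1^{2},\qquad w_1(0)=1,
\]
and hence explicitly solvable by the standard linearising change $w_1\mapsto w_1^{-1}$. This is worth recording because it directly answers the blow-up concern you raise at the end: the Bernoulli solution gives a closed-form expression for $w_1$, from which the exact finite-time singularity (if any) can be read off in terms of integrals of $r\,q$ and $r\,\varphi(y_1)(q-q(y_1))$ along the characteristic, rather than requiring separate a priori estimates.
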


\begin{proof}
Substitute the ansatz \eqref{eq:u1_q} in $D_1$ and \eqref{eq:u2_q} in $D_2$ into \eqref{eq:advdiff_mu} and compare coefficients along characteristics.  Equations \eqref{eq:u1_q} and \eqref{eq:Z1} can be reduced to Bernoulli-type equations \eqref{eq:u2_q} and \eqref{eq:Z2},
which admit analytical solutions. This reduction is important, since it allows
explicit evaluation of the function \(q(t)\), which characterizes the influence
of preventive and sanitary measures on the population dynamics.

Let $w_1=\exp(Z_1)$. Along $x-V(t)=y_1$,
\begin{flalign}
\frac{d w_1}{dt} =&
r( y_1 + V(t), t)\, q( y_1 + V(t))\, w_1
+ \\\nonumber
&r( y_1 + V(t), t)\, \varphi(y_1)\,
\frac{q(y_1 + V(t)) - \varphi(y_1)}{q(y_1) - \varphi(y_1)} \, w_1^2,
\\&w_1(y_1,0)=1.\nonumber
\end{flalign}
Similarly for $w_2=\exp(Z_2)$ along $t - V^{-1}(x)=y_2$ one obtains \eqref{eq:Z2}.
\end{proof}

Figure~\ref{fig:p4a}  shows the level surface for \eqref{eq:advdiff_mu}–\eqref{eq:ibc} with
\[
\varphi(x) = x e^{-(x/\varepsilon)^2},\quad g(t) = \frac{\varepsilon_1 t}{1+t},\quad
r(x) = \frac{\delta x}{1+\delta_1 x^2},
\]
parameters
\[
\varepsilon=0.5,\quad \varepsilon_1=0.3,\quad \delta=0.15,\quad \delta_1=0.1,\quad v_0=0.3,
\]
and limiting density
\begin{equation}
q(x) = \frac{1 + 0.4\,x}{1 + x}.
\label{eq:qexample}
\end{equation}
Figure~\ref{fig:p4b}  shows the same with
\[
g(t) = 0.15\,(1-\cos t).
\]
These plots illustrate that constraints on $q(x)$ substantially affect total genotype number and allow one to encode preventive, sanitary, or therapeutic impacts in a parsimonious way.
\begin{figure}[!h]
  \centering
  \begin{subfigure}[t]{0.45\linewidth}
    \centering
    \includegraphics[width=\linewidth]{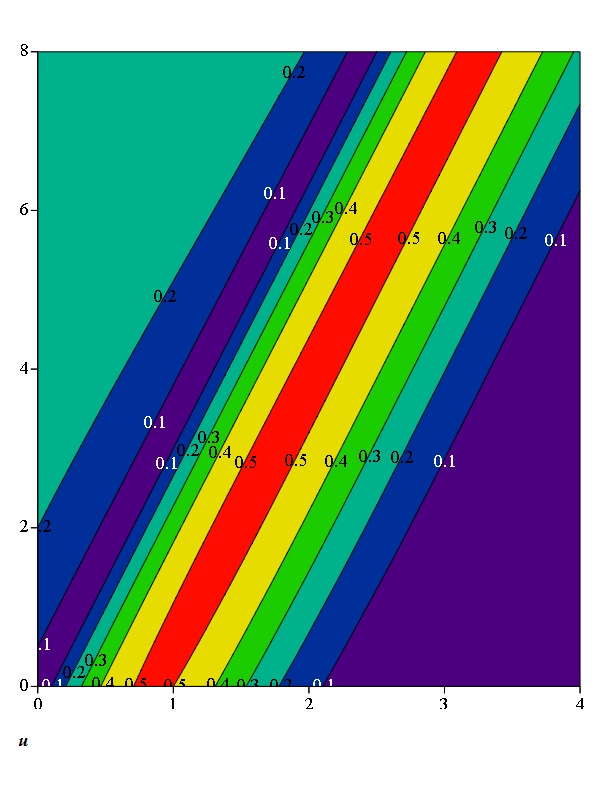}
    \caption{Case:  boundary condition \\ $g(t)=\tfrac{\varepsilon_1 t}{1+t}$.}
    \label{fig:p4a}
  \end{subfigure}
  \hfill
  \begin{subfigure}[t]{0.45\linewidth}
    \centering
    \includegraphics[width=\linewidth]{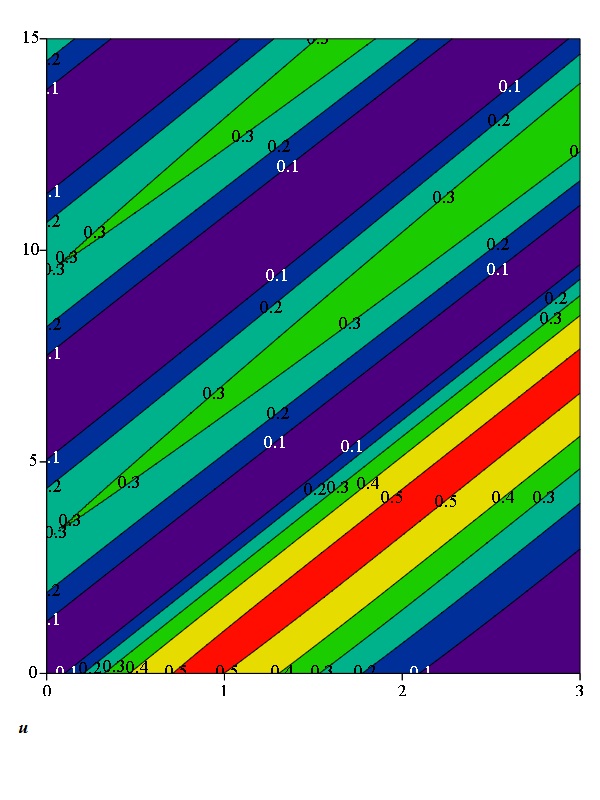}
    \caption{Case: oscillatory inflow \\$g(t)=0.15(1-\cos t)$.}
    \label{fig:p4b}
  \end{subfigure}
  \caption{Level sets of the solution $w(x,t)$: 
  (a) With monotone inflow, propagation is steady and evenly spaced.  
  (b) With oscillatory inflow, the level sets display periodic modulation, indicating alternating acceleration and slowing of transport.}
  \label{fig:p5}
\end{figure}

\section{Accounting for backward mutations (numerics)}

When $\mu\neq 0$ and $F$ is given by the nonlocal form \eqref{eq:F_nonlocal}, \eqref{eq:advdiff_mu}–\eqref{eq:ibc} define an initial boundary value problem for a quasilinear parabolic equation with a nonlocal term. Analytical solutions are generally unavailable, so we adopt a numerical approach. 

At the right boundary $x=\bar x$ of a sufficiently large truncated interval $[0,\bar x]$, asymptotic analysis indicates stabilization in $x$, which motivates the Neumann boundary condition
\begin{equation}
\left. \frac{\partial u(x,t)}{\partial x} \right|_{x=\bar{x}} = 0.
\label{eq:neumann}
\end{equation}
To verify adequacy and select $\bar x$, solve a sequence of problems with \eqref{eq:neumann} at $\bar x=x_i$ ($x_1<x_2<\dots<x_k$). Let $u_i$ be the corresponding solutions. For a large $T$, define
\begin{equation}
\Delta_i = \int_0^{\bar{x}_i} \int_0^T \big( u_i(x,t) - u_{i-1}(x,t) \big)^2 \, dx\, dt.
\label{eq:Delta}
\end{equation}
If $\Delta_i$ becomes sufficiently small at some $x_i$, the choice $\bar x=x_i$ is deemed adequate.

As an illustration, consider \eqref{eq:advdiff_mu}, \eqref{eq:ibc}, \eqref{eq:F_nonlocal} with
\[
\varphi(x) = \varepsilon e^{-(x/\varepsilon)^2}, \quad 
g(t) = \frac{\varepsilon + \varepsilon_1 t}{1+t}, \quad 
r(x) = \frac{\delta x}{1+x},
\]
parameters $\varepsilon = 0.5$, $\delta=0.15$, $v_0=0.3$, $\gamma=0.3$, and compare level curves for $\varepsilon_1=0.3$ versus $\varepsilon_1=0$ (see Fig.~\ref{fig:p6}). 

Another dynamics, for a non-monotone growth of
\[
r(x) = \frac{\delta x}{1+\delta_1 x^2}, \qquad \delta_1=0.1,
\]
is  is shown in Fig.~\ref{fig:p7}.

\begin{figure}[!h]
  \centering
  \begin{subfigure}[t]{0.45\linewidth}
    \centering
    \includegraphics[width=\linewidth]{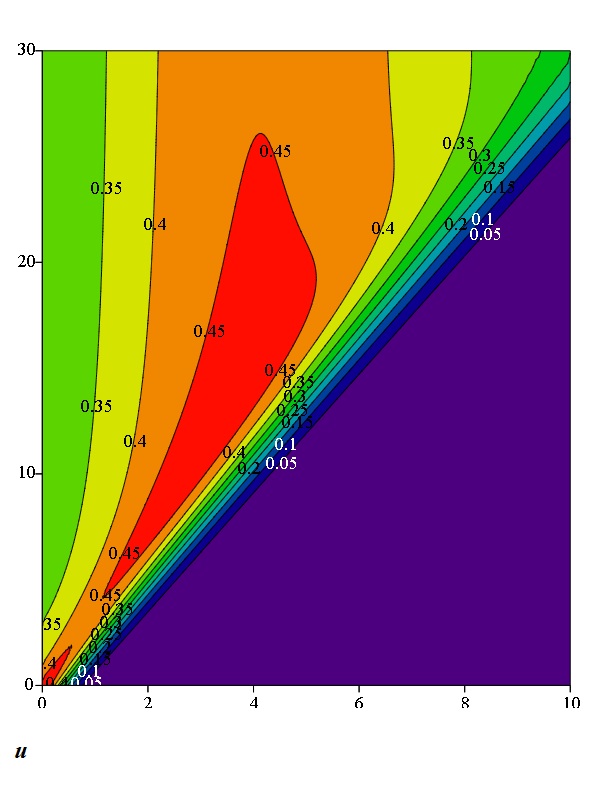}
    \caption{Case:  $\varepsilon_1=0.3$, inflow grows linearly at first, then saturates.}
    \label{fig:p5a}
  \end{subfigure}
  \hfill
  \begin{subfigure}[t]{0.45\linewidth}
    \centering
    \includegraphics[width=\linewidth]{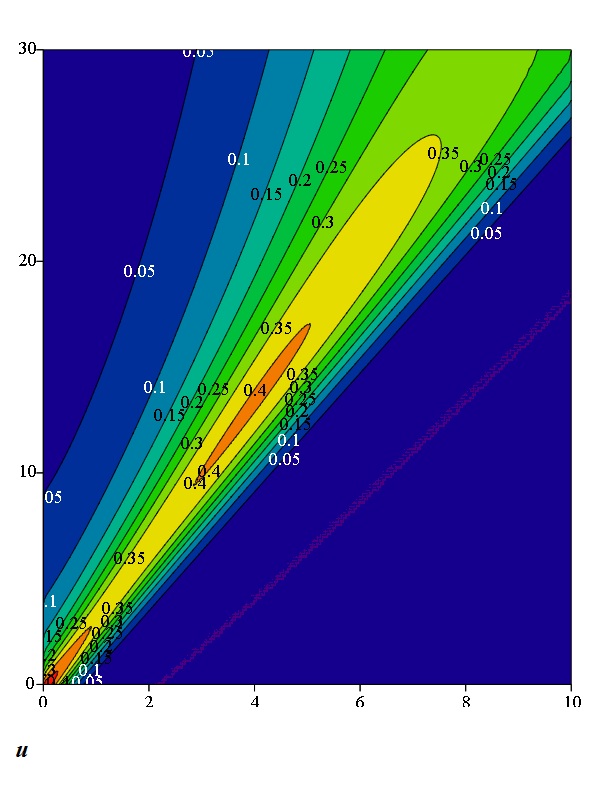}
    \caption{Case: $\varepsilon_1=0$, inflow is  decaying.}
    \label{fig:p5b}
  \end{subfigure}
  \caption{Level sets of the solution $u(x,t)$ under different boundary inflows \(g(t)\).
Both systems share the same growth function 
\(r(x)=\tfrac{\delta x}{1+x}\), transport velocity \(v(t)=v_0\),
and saturation.}
  \label{fig:p6}
\end{figure}

\begin{figure}[!h]
  \centering
  \begin{subfigure}[t]{0.45\linewidth}
    \centering
    \includegraphics[width=\linewidth]{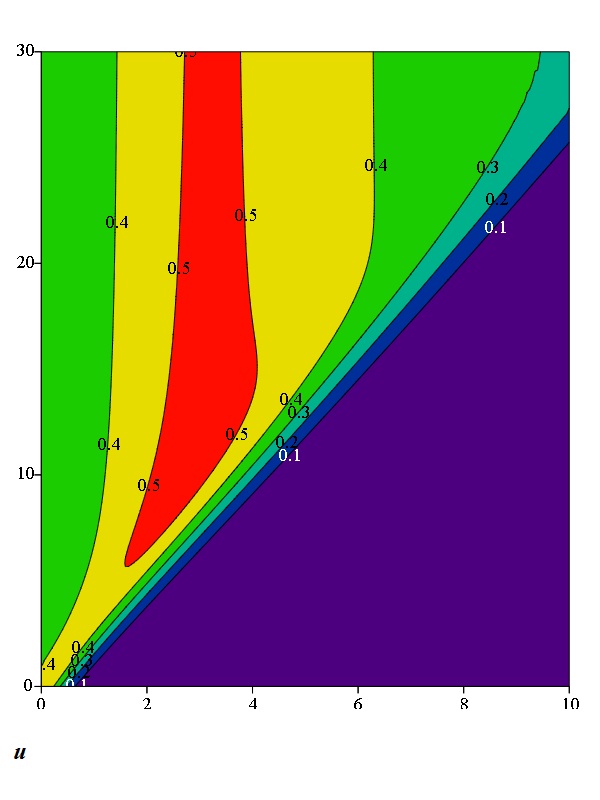}
    \caption{Case:  $\varepsilon_1=0.3$, inflow grows linearly at first, then saturates.}
    \label{fig:p6a}
  \end{subfigure}
  \hfill
  \begin{subfigure}[t]{0.45\linewidth}
    \centering
    \includegraphics[width=\linewidth]{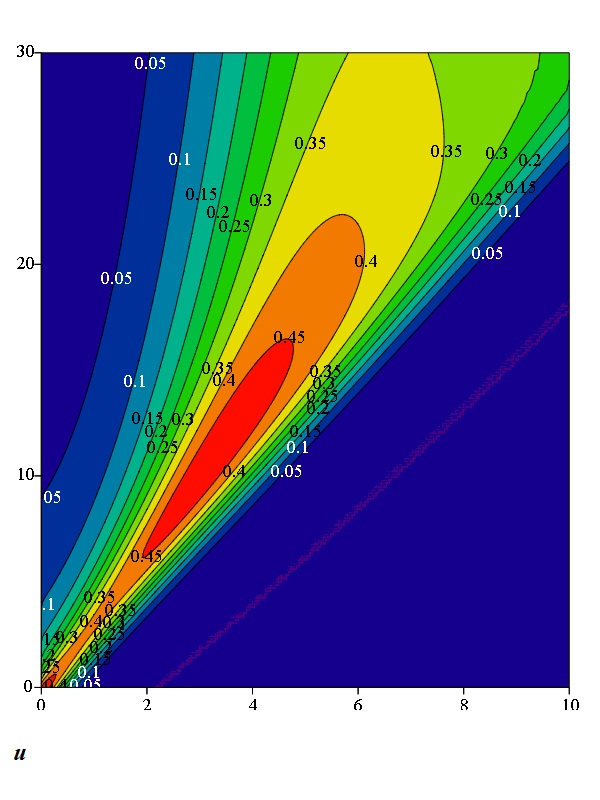}
    \caption{Case: $\varepsilon_1=0$, inflow is  decaying.}
    \label{fig:p6b}
  \end{subfigure}
  \caption{ Comparison of solutions under different boundary inflows \(g(t)\). Both share the same transport \(v(t)=v_0\) and growth law 
\(r(x)=\tfrac{\delta x}{1+\delta_1 x^2}\).}
  \label{fig:p7}
\end{figure}

\section{Conclusion}

The model presented in this article offers a comprehensive framework for mutation–selection dynamics by integrating time-varying mutation rate, logistic growth constraints, nonlocal competition, and reversible mutation (modelled as diffusion in trait space). It generalises the classical quasispecies equations of Eigen and the Crow--Kimura model by allowing both forward and backward mutation flows, while accounting for finite genotype space and population saturation effects. In simplified parameter cases, the model reproduces known results: when mutation is absent, it reduces to logistic growth of the fittest clone, whereas under constant mutation it yields a steady-state mutation--selection balance, analogous to the quasispecies equilibrium described in earlier works~\cite{Eigen1971, Nowak1992, Bocharov2000}.  We assume that incorporating these effects brings the model closer to realistic representations of biological evolution, enabling it to describe and interpret complex phenomena observed in virology, microbiology, and cancer research. By introducing time-varying mutation rates, the model provides a natural framework to explore stress-induced mutagenesis and drug-driven evolution, bridging evolutionary theory and experimental observations in systems where organisms dynamically modulate their mutation rates~\cite{Shibai2017, Loeb2011}.

In cancer modelling, our approach aligns well with existing work on chemotherapy-induced resistance. For example, Cho \textit{et al.} (2018) developed models that explicitly incorporate drug-driven selection of resistant phenotypes in solid tumours, which is a natural application domain for our framework \cite{Cho2018}. The conceptual distinction between spontaneous and treatment-induced mutation resistance advanced by Greene \textit{et al.} also maps directly to our time-dependent mutation formulation \cite{Greene2019}. Moreover, the rich body of literature on metronomic chemotherapy, tumour heterogeneity, and control of evolving resistance provides a concrete bridge from theoretical modelling to clinical design \cite{Ledzewicz2016,Lorz2013}.

In viral evolution, our model can help formalise insights from recent empiricalformalise In \cite{Luo2024}, Luo proposes evolutionary constraints on SARS-CoV-2 mutation trajectories, offering realistic fitness landscapes for viral modelling. Tai \textit{et al.} (2025) study the trade-off between within-host mutation frequency and between-host transmission fitness, which highlights selective pressures that a dynamic mutation model can capture \cite{Tai2025}. More insights on time-varying, flow-based mutations can be driven from the empirical observations of persistent SARS-CoV-2 infections, where mutations accumulate over time \cite{Maarif2025}. Another example is the directed evolution of virus-like particles  \cite{Raguram2024}, which demonstrates how evolution of viral traits under laboratory selection can be guided by mutation–selection models.

In microbial adaptation, the concepts of mutation flux and competition under limited resources are well established \cite{Elena2003, Barrick2009}. Our model could simulate in vitro evolution experiments in bacteria under varying stress or antibiotic pressure, especially where mutagenesis is deliberately toggled over time. Observations of reversion of antibiotic resistance in absence of drug suggest that reversible mutation (diffusion in genotype space) \cite{Andersson2010,  Levin2000} is biologically plausible and important to capture.

Looking forward, the model discussed in this paper holds promise for empirical validation. In microbial evolution experiments, one could impose time-varying mutagenic environments caused by different interventions (e.g. pulses of UV, therapeutic agents or chemical mutagens) and track mutant frequency distributions over time by sequencing. In virology, longitudinal sampling of viral populations during antiviral treatment or chronic infection could test predictions about mutation front dynamics and saturation. In oncology, tracking tumour subclonal composition under adaptive therapy protocols might reveal patterns consistent with our model’s predicted equilibrium or cycling behaviour.

Future extensions could incorporate stochasticity (to capture drift or rare mutational events), spatial structure or tissue architecture (so competition and mutation flows are not globally mixed), and coupling with immune or environmental feedbacks (e.g. how host immunity or therapy responds adaptively to evolving populations). By combining analytical tractability with flexible biological realism, this model serves as a bridge between abstract evolutionary theory and experimentally testable predictions.

\section*{Acknowledgments}
A.~S.~Bratus was supported by the Russian Science Foundation under Grant No.~23-11-00116 
(\url{https://rscf.ru/project/23-11-00116/}, accessed on 15~October~2025) and funded by 
the Moscow Center for Fundamental and Applied Mathematics at Lomonosov Moscow State University 
under Agreement No.~075-15-2025-345.

\section*{Disclosure statement}
The authors declare that this research was conducted without any commercial or financial relationships that could be construed as a potential conflict of interest.
\newpage
\newpage

\end{document}